\documentclass[onefignum,onetabnum]{siamonline250211}

\usepackage{lipsum}
\usepackage{amsfonts}
\usepackage{graphicx}
\usepackage{epstopdf}
\usepackage{algorithmic}
\usepackage{csquotes}
\usepackage{tikz}
\usepackage{cite}
\usepackage[latin1]{inputenc}
\usetikzlibrary{trees}
\usetikzlibrary{matrix,positioning,quotes}
\ifpdf
  \DeclareGraphicsExtensions{.eps,.pdf,.png,.jpg}
\else
  \DeclareGraphicsExtensions{.eps}
\fi

\definecolor{green}{cmyk}{1,0,1,0}
\definecolor{blue}{cmyk}{1,0.2,0,0}
\definecolor{red}{cmyk}{0,1,1,0}

\tikzstyle{peers}=[draw,circle,green,bottom color=white,
                  top color= white, text=black,minimum width=10pt]
\tikzstyle{superpeers}=[draw,circle,blue, left color=white,
                       text=black,minimum width=15pt]
\tikzstyle{3core}=[draw,circle,red, left color=white,
                   text=black,minimum width=10pt]                       
\tikzstyle{legendsp}=[rectangle, draw, blue, rounded corners,
                     thin,bottom color=white, top color=white,
                     text=blue, minimum width=2cm]
\tikzstyle{legendp}=[rectangle, draw, green, rounded corners, thin,
                     bottom color=white, top color= white,
                     text= green, minimum width= 2cm]
\tikzstyle{legend3core}=[rectangle, draw, red, rounded corners, thin,
                     bottom color=white, top color= white,
                     text= red, minimum width= 2cm]

\newcommand{\cc}[1]{\ensuremath{\mathbf{#1}}}
\DeclareMathOperator{\per}{perm}

\usepackage{enumitem}
\setlist[enumerate]{leftmargin=.5in}
\setlist[itemize]{leftmargin=.5in}

\newsiamremark{remark}{Remark}
\newsiamremark{hypothesis}{Hypothesis}
\crefname{hypothesis}{Hypothesis}{Hypotheses}
\newsiamthm{claim}{Claim}
\newsiamthm{observation}{Observation}

\headers{Faster network motif discovery by counting isomorphic subtrees}{J. A. Grochow, T. Tohme}

\title{Faster network motif discovery by counting isomorphic subtrees\thanks{\funding{During this work, JAG was partially funded by NSF grant DMS-1622390, and TT was funded by J. A. Grochow startup funds.}}}

\author{Tarek Tohme\thanks{American University of Beirut, and Department of Computer Science, University of Colorado Boulder
  (\email{tarektohme1@gmail.com}).}
\and Joshua A. Grochow\thanks{Departments of Computer Science and Mathematics, University of Colorado Boulder 
(\email{jgrochow@colorado.edu}).}
}

\usepackage{amsopn}

\ifpdf
\hypersetup{
  pdftitle={Faster network motif discovery by counting isomorphic subtrees},
  pdfauthor={J. Grochow, T. Tohme}
}
\fi

\begin{document}

\maketitle

\begin{abstract}
  We develop a new algorithm for counting the number of subgraphs of a network isomorphic to a given query graph (\#\textsc{SubgraphIsomorphism}), motivated by network motif search. High-degree vertices (hubs), common in real-world networks, contribute to a combinatorial explosion in the number of subgraphs, making existing motif search algorithms intractable for motif sizes greater than $\approx 8$ on a wide variety of networks of interest. Our procedure leverages the $k$-core decomposition and a novel subtree-counting technique to quickly scan the periphery of a network. These two innovations allow our algorithm to significantly speed up its predecessors in practice, especially as most real-world networks have a relatively large periphery. We prove that \#\textsc{RootedSubtreeIsomorphism}, a key subroutine in our algorithm, is \cc{\#P}-complete via a reduction from counting bipartite matchings. We provide analytic upper bounds on our algorithm's execution time, and evaluate its performance on 11 real-world networks of varying topologies.
\end{abstract}

\begin{keywords}
  network motifs, subgraph isomorphism, $k$-core decomposition
\end{keywords}

\begin{AMS}
  05C85, 05C30, 05C60
\end{AMS}

\section{Introduction}

Networks are useful abstractions for studying structural patterns and properties in many types of systems
\cite{strogatz2001exploring, newman2018networks}. One way to study such patterns is by identifying repeated subnetworks, following the reasoning that in many systems, the frequent occurrence of a ``module" is an indicator of its importance in the function of the system. Patterns in a network that appear with statistically significant frequency with respect to a null model (an appropriately defined random network) are called \emph{network motifs} \cite{milo2002network}. Examples of networks where motifs have been studied include class diagrams in software architecture \cite{valverde2005network}, brain networks \cite{sporns2004motifs}, transcriptional regulation networks \cite{shen2002network} and many more \cite{ohnishi2010network, albert2004conserved, paulau2015motif, itzkovitz2005coarse}.

Since their introduction \cite{milo2002network}, an active line of research has been dedicated to the design of algorithms that search for such motifs ever more efficiently. Most of these algorithms fall under one of two categories: network-centric algorithms, in which all \cite{milo2002network, kashani2009kavosh, li2018mtmo, wernicke2006fanmod, AhmedPGD, santoso2020efficient} or a sample \cite{kashtan2004efficient, alon2008biomolecular, wernicke2006fanmod, RibeiroSilvaGTriesSampling} of size-$k$ subgraphs of a given network are enumerated, and motif-centric algorithms \cite{grochow2007network, mavisto, nemofinder, omidi2009moda, mbadiweKimParaMODA, RibeiroLopesSilvaParallelGTries, RibeiroSilvaGTriesSets, ribeiro-PHD2011, ribeiro2014g, ribeiro2010g} in which size-$k$ graphs are enumerated \emph{a priori} and mapped onto a network one by one. Typically in the former category, after enumeration, subgraphs are grouped into isomorphism classes and their frequencies are then calculated or estimated. Two advantages of motif-centric approaches are the possibility of searching for the instances of a single subgraph in a network, and the reduced cost of computing isomorphisms for each subgraph incurred in the classification task of network-centric algorithms. Additional ideas such as symmetry-breaking \cite{grochow2007network} and expansion trees \cite{omidi2009moda} have enabled further speedups, making it possible to search for all subgraphs up to size 8, and individual subgraphs up to size 30 \cite{grochow2007network}. The g-trie, introduced in \cite{ribeiro2010g}, is an efficient data structure that enables the search for any set of subgraphs, taking advantage of the similarity between subgraphs in that set. This minimizes the cost of redundant searches incurred by motif-centric approaches, which execute as many queries as there are subgraphs, disregarding their potential similarity, while still eliminating the unnecessary isomorphism checks of network-centric subgraph enumeration. This method used in combination with symmetry-breaking produces a vast improvement compared with pre-existing approaches \cite{ribeiro2014g}. Although an exhaustive list is beyond the scope of this paper, additional techniques have included focusing only on tree motifs \cite{li2018mtmo}, parallelizing subgraph enumeration \cite{shao2014parallel} or subgraph mapping \cite{lin2016network}.

A few remarks serve to contextualize and motivate our strategy for an improved motif search algorithm. First, we note that existing approaches to motif search all operate by either enumerating or mapping \emph{instances} of subgraphs inside a network. In principle, to discover motifs it would suffice to count all subgraphs of a network, rather than list them. The counting version of the subgraph isomorphism problem, \#\textsc{SubgraphIsomorphism}, asks how many subgraphs of a graph $G$ are isomorphic to another graph $H$. Algorithms for \#\textsc{SubgraphIsomorphism} include \cite{fomin2012faster} and \cite{alon2008biomolecular}. The latter color-coding technique achieves polynomial time when counting non-induced subgraphs of bounded treewidth, provided $|H| = O(\log(|G|))$. However it was recently proved \cite{curticapean2014complexity} that boundedness of the vertex cover number of $H$ is the only condition that guarantees tractability for \#\textsc{SubgraphIsomorphism}. Regardless of asymptotic behavior, counting subgraphs has the potential to be faster than listing them, owing to the use of clever counting techniques and to the memory access and storage overhead when large numbers of instances are involved. The ESCAPE algorithm \cite{pinar2017escape}, which manages to count all size-5 subgraphs orders of magnitude more efficiently than predecessors, is a testimony to this.

Another issue with existing algorithms is that none of them account for the difference in runtime incurred by different subgraph topologies. Early on in the investigation of network motifs, it was suggested \cite{itzkovitz2003subgraphs} that certain subgraphs could contribute disproportionately more to the runtime of motif search because of combinatorial explosions in the number of times they appear in a network. A good example is the star graph of size $k$, of which there are $\binom{n-1}{k-1}$ instances in a star graph of size $n \geq k$. Such combinatorial explosions are modeled statistically in \cite{picard2008assessing}, and further examples are discussed in \cite{grochow2007network}.

Lastly, scale-free networks are known to possess large hierarchically structured hubs \cite{barabasi2001deterministic} which can quickly make the motif search problem intractable in such networks. Putting aside the discussion of how ubiquitous scale-free networks really are \cite{broido2019scale}, biological networks, where investigations of network motifs originated, often also contain prohibitively large hubs \cite{prvzulj2004modeling} whether their degree distributions are scale-free, geometric, or log-normal. As we will see examples of experimentally, the runtime of motif search can even be dominated by a single motif while searching among hundreds or thousands of motifs, especially when the target network contains large hubs.

Taking into account the above remarks, we make \#\textsc{SubgraphIsomorphism} our focus in the present paper. We design a counting algorithm that leverages tree and hub structure in both networks and motifs, and can in principle be used as a subroutine to any standard motif-centric algorithm. In this paper, we adapt it to the Grochow--Kellis symmetry-breaking algorithm \cite{grochow2007network}. Our main contribution consists in separating both the network and the query graph into two subsets using the $k$-core decomposition: the ``periphery", which contains all the low-connectedness nodes, and the more interconnected nodes, which can be viewed as ``the core". A key feature of this decomposition is that it guarantees that the periphery consists entirely of trees. This enables us to run the peripheries of both network and query through a fast subtree-counting algorithm, saving considerable time compared with standard motif search procedures. 

In addition, the $k$-core decomposition offers a powerful early abort for motif search. The \emph{coreness} of a vertex imposes a strong constraint on the possible isomorphisms that can map a query graph to a target network, allowing our algorithm to waste less time trying out mappings that cannot yield valid instances. Complementing this result, we show that under worst-case analysis, \#\textsc{RootedSubtreeIsomorphism}---where the network $G$ and query $H$ are rooted unlabeled trees---is \cc{\#P}-complete. This suggests that one should not expect significantly improved worst-case bounds for the subtree counting portion of our algorithm, so heuristic approaches such as ours, which take advantage of structure in real-world data, may be the best one can hope for.

\paragraph{Organization} In Section \ref{sec:subtrees} we recall the definitions of core and periphery that we use, provide an algorithm for subtree-counting and analyze its time complexity. Section \ref{sec:main} introduces our main algorithm and provides an analytic treatment of its performance. We then test the algorithm on a corpus of 11 real networks, and relate its empirical performance to several key network statistics in Section \ref{sec:experiments}. In Section \ref{sec:discussion} we discuss the impact of subtree-counting and queries of high coreness on the computational complexity of motif search. We conclude in Section \ref{sec:conclusion} with a few suggestions for further extensions and applications of our method.

\paragraph{AI usage statement}
As far as the authors are aware, no LLMs or LLM-based products were used in the process of research nor writing of this paper. (We remark that almost all of this work was carried out in 2020; only final polishing of the manuscript was carried out thereafter, for which we did not use AI.)

\section{Counting subtrees in the $1$-shell}
\label{sec:subtrees}
\subsection{Background on the $k$-core decomposition}
\label{subsec:kcore}
The $k$-core decomposition \cite{kong2019k} reveals a ``core-periphery" structure to a given network. The $k$-core of a graph $G$ is the (unique) maximal subgraph $K$ such that every vertex in $K$ has at least $k$ neighbors in $K$. The \emph{$k$-shell} is the set of vertices in the $k$-core that are not part of the $k+1$-core. A vertex inside the $k$-shell is said to be of \emph{coreness $k$.} The $k$-core decomposition can be computed in linear time as follows: remove all vertices of degree $1$ until there are none left, iteratively including vertices whose degree becomes at most one during this removal process. Those removed vertices constitute the 1-shell, and the remaining vertices are the $2$-core. For each $k=2,3,\dotsc$, we continue this process: remove all vertices of degree at most $k$, including those whose degree becomes $\leq k$ during this process. The vertices removed are the $k$-shell and those left constitute the $(k+1)$-core. As we remove vertices, we can label them with their coreness for ease of future look-up.

\begin{figure}[h]
\label{fig:kcorepicture}
\centering
\begin{tikzpicture}[auto, thick]
  \foreach \place/\name in {{(0,-1)/a}, {(2,0)/b}, {(2,2)/c}, {(0,2)/d},
           {(-2,1)/e}}
    \node[superpeers] (\name) at \place {};
  \foreach \source/\dest in {a/b, a/c, a/d, b/c, c/d, a/e, d/e, b/d}
    \path (\source) edge (\dest);
  \foreach \place/\name in {{(0,-1)/f}, {(2,0)/g}, {(2,2)/h}, {(0,2)/i}}
    \node[3core] (\name) at \place {};
  \foreach \pos/\i in {above left of/1, left of/2, below left of/3}
    \node[peers, \pos = e] (e\i) {};
   \foreach \speer/\peer in {e/e1,e/e2,e/e3}
    \path (\speer) edge (\peer);
   \foreach \pos/\i in {above right of/1, right of/2, below right of/3}
    \node[peers, \pos =b ] (b\i) {};
   \foreach \speer/\peer in {b/b1,b/b2,b/b3}
   \path (\speer) edge (\peer);
   \node[peers, above of=d] (d1){};
   \path (d) edge (d1);
   \node[peers, left of=e2] (e5){};
   \path (e5) edge (e2);
   \node[peers, left of=e5] (e6){};
   \path (e6) edge (e5);
   \foreach \pos/\i in {below left of/1, below of/2}
   \node[peers, \pos =a ] (a\i) {};
   \foreach \speer/\peer in {a/a1,a/a2}
   \path (\speer) edge (\peer);
   \node[legendp] at (5,2) {\small{1-shell}};
   \node[legendsp] at (5,1) {\small{2-core}};
   \node[legend3core] at (5,0) {\small{3-core}};
\end{tikzpicture}
\caption{An example $k$-core decomposition, revealing the ``core'' and ``periphery'' of the network.}
\end{figure}

\subsection{The subtree-counting algorithm} \label{subsec:per}
The $1$-shell consists of a forest (disjoint union of trees); if a given connected component is not itself a tree, then trees in the $1$-shell of that component will have their roots belonging to the $2$-core (see Figure \ref{fig:kcorepicture}).
To see why, notice that no tree can contain a nontrivial subgraph where every vertex is of degree 2 or more. Performing a $k$-core decomposition on $H$ and $G$ reveals these peripheral trees, if they exist. By selecting one root at a time from each of $H$ and $G$'s 1-shell trees, we can then run subtree-counting as a subroutine of a motif-centric algorithm. We build on the Grochow--Kellis symmetry-breaking algorithm \cite{grochow2007network}, but in principle our subtree-counting technique can be used in combination with most standard motif-centric algorithms with little modification. In practice, there are subtleties of implementation that have to do with with symmetry-breaking and the 1-shell/2-core boundary that need to be addressed, which we discuss in Section \ref{sec:boundary}. In the remainder of this section we discuss the computational complexity of subtree-counting and motivate its use in the context of motif search. 

The permanent of an $m \times n$ matrix $A = (a_{ij})$ is defined as
\begin{equation}\label{eq:permanent}
  \per(A) = \sum_{\sigma \in \Sigma(m,n)} \prod_{i=1}^{n}a_{i,\sigma(i)}
\end{equation}
where $\Sigma(m,n)$ is the set of all injections $\{1, \dots, m\} \to \{1, \dots, n \}$. In a bipartite graph $B$ with $m+n$ vertices, we call a matching \emph{saturated} if it covers all $m$ vertices on the first side. If $B$ is a weighted bipartite graph with weighed $m \times n$ adjacency matrix $W = (w_{ij})$, and we define the weight of a saturated matching $s$ as the product of the weights of all its edges

\begin{equation}\label{eq:weighed_bipartite}
  w(s) = \prod_{i=1}w_{i,s(i)}
\end{equation}
then the permanent of $W$ is the sum of the weights of all saturated matchings of $B$.

\paragraph{A bipartite graph for counting subtrees recursively} Suppose we wish to count the number of instances of the rooted tree $(T,r)$ in the rooted tree $(S,r')$. Consider a bipartite graph where left vertices correspond bijectively to the children of $r$ in $T$, and right vertices to the children of $r'$ in $S$. For a left vertex $u$, let $T_u$ denote the subtree of $T$ rooted at $u$, and for a right vertex $v$ similarly let $S_v$ denote the subtree of $S$ rooted at $v$. In the bipartite graph, we place the edge $(u,v)$ with weight equal to the number of isomorphic copies of $T_u$ as a rooted subtree of $S_v$. Then the weighted sum of saturated matchings in this graph is the total number of rooted copies of $T$ in $S$. To determine the edge weights $(u,v)$, we can use this process recursively on the pair of rooted trees $(T_u, u)$ and $(S_v, v)$.

If the roots of trees $\{T,S\}$ have $k$ and $l$ children respectively, we need to compute $\binom{l}{k}$ permanents of $k \times k$ matrices to count all instances of $T$ in $S$. Hubs that appear in trees can make $\binom{l}{k}$ a very large number, and permanents of $k \times k$ matrices can be expensive to calculate when $k$ is large. Note that these computations happen for every non-leaf vertex of $T$, as we elaborate below. It is often the case that hubs have many degree-1 nodes connected to them---and this will be even more likely as we get to nodes further down in the tree---and we noticed experimentally that such cases were taking a large fraction of the computation time, despite being mathematically very simple. We thus take advantage of the following lemma to simplify the computation in the case of degree-1 children of the root (which can then be applied to the degree-1 children of any node during the recursive process).

\begin{observation}
Given rooted trees $(T,r)$ and $(S,r')$, let $B$ is the weighted bipartite graph above with weighted adjacency matrix $W$. If $T$ contains $m$ children at depth 1, of which $m-m'$ are leaves, then $B$ contains $m-m'$ rows that are all-1s. Furthermore, if $S$ contains $n$ children at depth $1$ of which $n-n'$ are also leaves, then up to permutations of the rows and (separately) columns, $W$ has the form
\begin{equation} \label{eq:W}
W = 
\begin{pmatrix}
	W' & \mathbf{0}  \\
	\mathbf{1} & \mathbf{1} 
\end{pmatrix}
\end{equation}
where $W'$ has shape $m' \times n'$.
\end{observation}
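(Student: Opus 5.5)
The plan is to compute the entries $w_{uv}$ of $W$ directly from their definition. Recall that $w_{uv}$ is the number of isomorphic copies of the rooted tree $(T_u,u)$ as a rooted subtree of $(S_v,v)$, meaning copies whose root is mapped to $v$. I will split the rows according to whether $u$ is a leaf of $T$, and split the columns according to whether $v$ is a leaf of $S$. Then I will read off each of the resulting four blocks. After that, the block form (\ref{eq:W}) is obtained by permuting rows and columns so that the non-leaf children come first on each side.

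First consider a row where $u$ is a leaf child of $r$. Then $T_u$ is the single vertex $u$. For every right vertex $v$, the tree $S_v$ contains exactly one rooted copy of a single vertex, namely $v$ itself. Hence $w_{uv}=1$ for every column $v$. This proves the first claim: the $m-m'$ leaf rows are all-1s. It also gives the two bottom blocks $\mathbf{1}$, $\mathbf{1}$ in (\ref{eq:W}).

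Next consider a row where $u$ is a non-leaf child of $r$, and a column where $v$ is a leaf child of $r'$. Now $T_u$ has at least one edge, while $S_v$ is the single vertex $v$. A rooted copy of $T_u$ in $S_v$ would need an injective image of at least two vertices inside a one-vertex tree, so there is no such copy and $w_{uv}=0$. This gives the upper-right block $\mathbf{0}$, of shape $m'\times(n-n')$. The remaining block, indexed by non-leaf $u$ and non-leaf $v$, is left unconstrained and is named $W'$; it has shape $m'\times n'$.

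The only point needing care is consistency with the recursive weight definition. When the rooted-copy counting is applied to single-vertex trees, I must check that it returns $1$ for a one-vertex query in any target, and $0$ for a nontrivial query in a one-vertex target. Under the recursion this comes out as follows. A leaf has no children, so its bipartite graph has an empty left side, and the permanent over the empty set of injections is the empty product, which is $1$. In the other case, a non-leaf query has a child but the leaf target has none, so there are no saturated matchings and the permanent is $0$. Once this convention is fixed, the rest is only the bookkeeping of reordering rows and columns into the stated block form.
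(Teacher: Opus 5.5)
Your proof is correct and is the same direct entry-by-entry check the paper relies on. The paper states the Observation without proof and reads it straight off the definition of the weights: a leaf child $u$ gives $T_u=\{u\}$, which has exactly one rooted copy in every $S_v$, while a non-leaf $T_u$ cannot fit into a one-vertex $S_v$. Your extra check that the permanent recursion returns $1$ for an empty left side and $0$ for a nonempty left side with empty right side is not needed for the Observation, but it correctly confirms that the recursive algorithm produces these same weights.
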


\begin{lemma}
\label{lem:simplified_per}
  Let $W$ be an $m \times n$ integer matrix of the form \eqref{eq:W} with parameters $m',n'$ as above.
Then
\begin{equation}\label{eq:simplified_per_eq}
    \per(W) = \per(W') \times \frac{(n-m')!}{(n-m)!}
\end{equation}
\end{lemma}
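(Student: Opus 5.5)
Since $\per(W)$ is a sum over injections $\sigma\colon\{1,\dots,m\}\to\{1,\dots,n\}$, I will organize the sum by first choosing images for the $m'$ rows of the $W'$ block and then counting completions for the $m-m'$ all-ones rows. I order rows so that rows $1,\dots,m'$ form the top block (entries $W'$ in columns $1,\dots,n'$ and $\mathbf{0}$ in columns $n'+1,\dots,n$), and rows $m'+1,\dots,m$ are the all-$\mathbf{1}$ rows. The product $\prod_i w_{i,\sigma(i)}$ is nonzero only if $\sigma$ sends every top row into $\{1,\dots,n'\}$, because the top rows are zero in the other columns. I therefore restrict to injections whose restriction $\tau=\sigma|_{\{1,\dots,m'\}}$ is an injection into $\{1,\dots,n'\}$. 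For such $\sigma$ the weight factors as $\prod_{i\le m'} w'_{i,\tau(i)}\cdot\prod_{i>m'}1=\prod_{i\le m'} w'_{i,\tau(i)}$.

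Next I fix $\tau$ and count the extensions $\sigma$. The remaining $m-m'$ rows must go injectively into the $n-m'$ columns not used by $\tau$, and every such column has entry $1$ in those rows. The number of injections of an $(m-m')$-set into an $(n-m')$-set is $(n-m')!/((n-m')-(m-m'))! = (n-m')!/(n-m)!$. This count does not depend on $\tau$, so summing over $\tau$ gives
\[
\per(W)=\sum_{\tau\in\Sigma(m',n')}\prod_{i=1}^{m'}w'_{i,\tau(i)}\cdot\frac{(n-m')!}{(n-m)!}=\per(W')\,\frac{(n-m')!}{(n-m)!},
\]
where the first factor is $\per(W')$ by \eqref{eq:permanent}.

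The steps are routine, and the only point needing care is the degenerate cases. If $n<m$ there are no injections, so $\per(W)=0$, and the formula should be read with $1/(n-m)!=0$. If $n'<m'$ then $\per(W')=0$, which is consistent. I would also note explicitly that permuting rows and columns does not change the permanent, so it is enough to treat the block form \eqref{eq:W} exactly as written.
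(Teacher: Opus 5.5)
Your proof is correct and follows essentially the same route as the paper's. Both restrict to injections sending the $W'$-rows into the first $n'$ columns, factor out $\prod_{i\le m'} w'_{i,\tau(i)}$, and count the completions of the all-ones rows into the $n-m'$ unused columns, giving $(n-m')!/(n-m)!$ independently of $\tau$; your treatment of the degenerate cases (e.g.\ reading $1/(n-m)!$ as $0$ when $n<m$) is a small addition the paper leaves implicit.
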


\begin{proof}
Each nonzero summand in the permanent $\per(W)$ corresponds to the following procedure: 1) first choose an injection $\iota\colon\{1,\dotsc,m'\} \to \{1,\dotsc,n'\}$---since any other choice in the last $m'$ rows would lead the summand to be zero---and compute $p_{\iota} := \prod_{i=1}^{m'} W_{i,\iota(i)}$; then 2) remove the first $m'$ rows from $W$ and the columns $\{\iota(1),\dotsc,\iota(m')\}$ from $W$, and 3) multiply $p_{\iota}$ by the permanent of what remains. Since the last $m-m'$ rows are all-1s, what remains is always an all-1s matrix of size $(m-m') \times (n-m')$. The permanent of an all-1s matrix of that size is simply the number of injections $\{1,\dotsc,m-m'\} \to \{1,\dotsc,n-m'\}$, which is $(n-m')! / (n-m)!$.
\end{proof}

Thus, If $r$ has $m$ children of which $m'$ are not leaves, and $r'$ has $n$ children, the contribution of $r$'s leaves to the total number of matchings is $\frac{(n-m')!}{(n-m)!}$. $\per(W')$ then counts the matchings of $T_u$'s rooted subtrees of depth greater than 1 in $S_v$.\\

\begin{remark} 
Note that Lemma \ref{lem:simplified_per} can be directly generalized to account for more complex special cases of the weight matrix $W$. Namely, subtrees of $T$ can be grouped into isomorphism classes, and the contribution of each class to the total number of matchings can be obtained algebraically, by reasoning over the shape and number of rooted subtrees of $S$. Implementing such counting in code seems significantly more complicated than for the case of Eq. \ref{eq:simplified_per_eq}, and we found in practice that the latter already offers large speedups when applied to real-world networks.
\end{remark}

We use the above to propose a subtree-counting algorithm that achieves drastic speedup in practice when compared with regular motif search algorithms running on trees.

\begin{algorithm}
\caption{\textsc{CountRootedSubtrees}$(T, S, t, s, D)$}
\label{alg:buildtree}
\begin{algorithmic}
\STATE{treeChildrenT := $\{i \in t$.children $\;|\; \deg(i)>1\}$}
\STATE{treeChildrenS := $\{i \in s$.children $\;|\; \deg(i)>1\}$}
\STATE{leafChildrenT := $\{i \in t$.children $\;|\; \deg(i)=1\}$}
\STATE{subtreesPreorderStrings := $\{D[i] \enspace \forall i \in $ treeChildrenT\}}
\STATE{}

\IF{$t$.children = $\emptyset$}
\RETURN 1
\ELSIF{treeChildrenT = $\emptyset$}
\STATE{$k := t$.children.size}
\STATE{$n := s$.children.size}
\RETURN $n! / (n-k)!$ 
\ENDIF
\STATE{}

\FOR{$u$ in treeChildrenT}
\FOR{$v$ in treeChildrenS}
\STATE{$T_u$ := detachTree($T, u, t)$} \quad //the ``downward-facing" subtree of $T$ rooted at $u$
\STATE{$S_v$ := detachTree($S, v, s)$}
\STATE{$W'[u][v]$ := \textsc{CountRootedSubtrees}($T_u, S_v, u, v, D$)}
\ENDFOR
\ENDFOR
\STATE{}

\STATE{equivalenceClassSizes := \{countStringsEqualTo[$k$] $\enspace \forall k \in$ subtreesPreorderStrings\}}
\STATE{multiplicity := $\prod_{n \in equivalenceClassSizes} n!$}
\STATE{}

\STATE{$n := s$.children.size $-$ treeChildrenT.size}
\STATE{$k := $leafChildrenT.size}
\RETURN $(\per(W')$ $\div$ multiplicity $) \times n! / (n-k)!$
\end{algorithmic}
\end{algorithm}

Algorithm \ref{alg:buildtree} counts all the rooted instances of an unlabeled rooted tree $T$ inside an unlabeled rooted tree $S$, where the roots of both trees are given by $t$ and $s$ respectively. $D$ is a dictionary that stores at key $u$ the preorder traversal string of the subtree of $T$ rooted at $u$ for all nodes $u \in T$ (here we abuse the term ``subtree" by having it refer to the subgraph of $T$ that contains $u$ and all its descendants relative to the root of $T$).

Among $T$'s and $S$'s depth-1 vertices, the subset of those that form a star graph along with the root (i.e. those of degree 1) are called leafChildren. The remaining vertices are the treeChildren. By Lemma \ref{lem:simplified_per}, we can compute matchings among the treeChildren of $T$ and $S$ separately, and obtain the matchings between leafChildrenT and the children of $s$ from the binomial factor.

We are interested in counting only one copy of $T$ inside $S$ for each one of $T$'s automorphism classes, since we are dealing with unlabeled trees where symmetries do not matter. To account for this, the algorithm starts by listing the preorder strings of the subtrees of $T$ rooted at $T$'s treeChildren, and grouping the identical strings among that list into equivalence classes (recall that a preorder traversal uniquely determines an unlabeled tree, up to symmetry). There are $n!$ possible permutations of the elements of an equivalence class of subtrees, whence the correction factor \emph{multiplicity}.

Next, the algorithm checks for the two base cases, where $T$ consists of only one node with no children, and where $T$ has no treeChildren. Otherwise, it loops through every pair of subtrees ($T_u$, $S_v$) of depth greater than 1 and recursively counts the instances of $T_u$ in $S_v$, storing the count in a matrix $W'$ at entry $[u][v]$. This is equivalent to constructing a directed bipartite graph and assigning a weight $k$ to edge $(u, v)$.

Finally, the permanent of $W'$ is computed using Ryser's formula \cite{ryser1963combinatorial}:
\begin{equation}
	\per(W') = \sum_{k=0}^{n-1}(-1)^k\Sigma_k
\end{equation}
where $\Sigma_k$ is obtained by summing the products of the row-sums of $A_k$, the matrix obtained by deleting $k$ rows from $W'$, for all combinations of rows. We use this expression of the permanent as a `black box', but more efficient formulas could easily be inserted in our algorithm when they become available. The worst-case time complexity of Ryser's formula is $O(2^{n-1}n^2)$ \cite{ryser1963combinatorial}. Nonetheless, to get an idea of how much faster Algorithm \ref{alg:buildtree} is than regular motif search over trees in practice, see Section \ref{sec:experiments}.

\subsection{Complexity of counting subtrees}
We refer to standard textbooks such as Arora \& Barak \cite{arora2009computational} for background on \cc{\#P} and \cc{\#P}-completeness. Here we remark that \cc{\#P}-complete problems are at least as hard, and thought to be significantly harder, than \cc{NP}-complete problems, so a \cc{\#P}-completeness result for a counting problem indicates strongly that we should not expect a subexponential-time algorithm to exist \cite{dell2014exponential}. In this section, we prove such a result for the following problem. \\
\begin{quotation}
\noindent \#\textsc{RootedSubtreeIsomorphism} \\
\textit{Input:} Two rooted trees $T$ and $S$ with their respective roots $t$ and $s$. \\
\textit{Output:} The number of rooted subtrees of $S$ that are isomorphic to $T$
\\
\end{quotation}
We note that this result is different from the one proved by Goldberg and Jerrum in 2000 \cite{goldberg2000counting}, in which counting all rooted subtrees which are \emph{distinct} up to (abstract) isomorphism within a given tree is proved \cc{\#P}-complete, whereas our result concerns counting all rooted subtrees of $S$ isomorphic to a given rooted tree $T$. \\

For $(S,s)$ a rooted tree with root $s$, a rooted subtree is a tree $(T,t)$ with the same root $t=s$ and such that $V(T) \subseteq V(S)$. For rooted trees $T$ and $S$, we write $T \preceq S$ if there exists a root-preserving isomorphism between tree $T$ and a subtree of $S$, and $T \preceq ! \, S$ if such an isomorphism is unique up to the symmetries of $T$, \textit{i.\,e.}, there is a unique rooted subtree of S that is rooted-isomorphic to $T$.

We recall here the definition of \cc{\#P} (e.\,g., \cite{arora2009computational})  for convenience:
\begin{definition} \label{def:sharpp}
	A function $f : \{0,1\}^* \mapsto \mathbb{N}$ is in \cc{\#P} if there exists a polynomial-time Turing machine $M$ such that for all $x \in \{0,1\}^*$, $f(x)$ is the number of polynomial-size certificates $y$ satisfying $M(x, y) = 1$.
\end{definition}

\begin{theorem}\label{thm:subtree_count}
\#\textsc{RootedSubtreeIsomorphism} is \cc{\#P}-complete.
\end{theorem}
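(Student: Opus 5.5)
Membership in \cc{\#P} is the easy half and I would dispose of it first, using Definition~\ref{def:sharpp}. A certificate is an injective map $\phi\colon V(T)\to V(S)$ with $\phi(t)=s$ that sends edges to edges; it has polynomial size and can be checked in polynomial time. Such maps count rooted copies of $T$ in $S$ exactly $|\mathrm{Aut}(T,t)|$ times each, so they do not directly give the number of subtrees. To fix this I would use a canonical certificate: order the children of every vertex of $S$ by a fixed labeling, and require that whenever two sibling subtrees of $T$ are rooted-isomorphic (same canonical preorder string, computable in polynomial time as in Algorithm~\ref{alg:buildtree}), their images are in increasing order. Then each rooted subtree of $S$ isomorphic to $T$ has exactly one valid certificate. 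Equivalently, the certificate can simply be the vertex subset $U\subseteq V(S)$, with $M$ checking that $S[U]$ is a rooted subtree containing $s$ and is rooted-isomorphic to $(T,t)$; rooted tree isomorphism is polynomial (AHU).

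For hardness I would reduce from counting perfect matchings in bipartite graphs (\#\textsc{PerfectMatching}, \cc{\#P}-complete by Valiant). This matches the permanent structure already described in the paper. Let $B$ have left vertices $u_1,\dots,u_n$ and right vertices $v_1,\dots,v_n$. I would construct pairwise non-isomorphic rooted "gadget" trees $A_1,\dots,A_n$ (for instance, $A_i$ is a root with a single path of length $i$, or a spider, or a caterpillar indexed by $i$). I then need, for each right vertex $v_j$, a tree $C_j$ in which $A_i\preceq! \, C_j$ if $u_iv_j\in E(B)$, and $A_i\not\preceq C_j$ otherwise. The tree $T$ is a root $t$ with children the roots of $A_1,\dots,A_n$, and $S$ is a root $s$ with children the roots of $C_1,\dots,C_n$. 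Since $T$ and $S$ both have exactly $n$ children, every rooted copy of $T$ induces a bijection $i\mapsto j$ with $A_i\preceq C_j$. By the recursive bipartite decomposition, the count is then $\per(W)$, where $W_{ij}$ is the number of copies of $A_i$ in $C_j$, and this is in $\{0,1\}$. Because the $A_i$ are pairwise non-isomorphic, no multiplicity correction arises, so the count equals the number of perfect matchings of $B$.

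The main obstacle is designing the gadgets so that each containment is 0 or exactly 1, since copies tend to multiply. Paths fail: the path $A_i$ of length $i$ embeds in every longer branch, and a branch may contain several copies. My first attempt would use "brooms": let $A_i$ be a root with a path of length $i$ ending in a vertex with $N$ leaves, where $N>n$. Then let $C_j$ be a root with one branch for each neighbour $u_i$ of $v_j$, each branch being exactly $A_i$'s path followed by exactly $N$ leaves. A copy of $A_i$ must place its path down one branch, and the high-degree endpoint forces it to end at depth exactly $i$, giving exactly one choice of all $N$ leaves. The difficulty is that the root of $A_i$ has one child while the root of $C_j$ has many, and nothing prevents the path from sliding into a longer branch $i'>i$: intermediate path vertices of degree 2 cannot host $N$ leaves, but a branch with $i'>i$ fails only if no vertex at depth $i$ has $N$ children, which holds. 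So the brooms give exactly $[u_iv_j\in E]$ copies, as required. I would verify this depth argument carefully, and check that all sizes are polynomial ($O(n^2N)$ vertices).
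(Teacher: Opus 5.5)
Your proof is correct, and its skeleton matches the paper's. Membership uses a vertex-subset certificate checked by rooted tree isomorphism. Hardness is a parsimonious reduction from \#\textsc{PerfectMatchings*}: per-vertex gadgets satisfying \eqref{eq:bipartite_equiv} hang from a common root, so copies of $T$ in $S$ correspond to perfect matchings of $B$.

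Where you differ is the gadget, and yours is simpler.
\begin{itemize}
\item The paper encodes the index $i$ by branching: $T_{u_i}$ has $i$ depth-1 vertices with $|L|-i+1$ children each. It builds $S_{v_j}$ by ``inserting'' copies of the $T_{u_i}$ onto a shared set of $|L|$ depth-1 vertices, so the copies overlap. It then attaches chains of length $i-1$ to leaves to rule out multiple embeddings. Verifying \eqref{eq:bipartite_equiv} there means tracking how the overlapping copies share children, and the paper only sketches this.
\item Your brooms encode $i$ by the depth of the unique non-root vertex with at least two children. $C_j$ is just the root-gluing of the brooms $A_i$ for the neighbours $u_i$ of $v_j$, so branches never interact. \eqref{eq:bipartite_equiv} then follows from a short depth argument: in branch $i'$ the only possible image of the broom head sits at depth $i'$, which forces $i'=i$, and then the copy is the whole branch.
\end{itemize}
You also make explicit the automorphism bookkeeping the paper leaves implicit. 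Because the gadgets are pairwise non-isomorphic, the number of vertex sets equals $\per(W)$ with no division.

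Two small remarks:
\begin{itemize}
\item $N\ge 2$ already suffices; $N>n$ is unnecessary.
\item Your sentence about the path ``sliding into a longer branch'' reads like an unresolved difficulty. It is exactly the case your depth argument rules out, so state it that way. State the shorter-branch case $i'<i$ alongside it: there, no vertex at depth $i$ has $N$ children.
\end{itemize}
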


\begin{proof}
That \#\textsc{RootedSubtreeIsomorphism} is in \cc{\#P} is quite direct. The polynomial-time machine $M(x,y)$ as in the definition above works as follows. Following the notation of Definition~\ref{def:sharpp}, $x$ consists of the pair of rooted trees $(T,S)$, and the machine $M$ treats $y$ as a subset of $V(S)$. $M$ verifies that $s\in y$, and that the induced subgraph $S[y]$ is isomorphic to $T$. If so, $M$ accepts, and if not, $M$ rejects. 

\begin{figure}[tphb]\label{fig:equiv_example}
	\centering
	\tikzstyle{level 1}=[level distance=2cm, sibling distance=3.2cm]
	\tikzstyle{level 2}=[level distance=2cm, sibling distance=1cm]
	\tikzstyle{level 3}=[level distance=2cm, sibling distance=0.4cm]
	\tikzstyle{level 4}=[level distance=1cm, sibling distance=0.4cm]
	\tikzstyle{level 5}=[level distance=1cm, sibling distance=0.4cm]
	
	\tikzstyle{end} = [circle, minimum width=3pt,fill, inner sep=0pt]
	\begin{center}
	\begin{tikzpicture}[grow=right, scale=0.5, baseline=0cm]
    \node[end, label=left:{$T$}] {}
        child {
            node[end, label=below:{$T_{u_3}$}, color=blue] {}        
            child [color=blue]{
                    node[end] {}
                    child {
                    	node[end, label=right:{}] {}
                    	child [color=red]{
                    	    node[end] {}
                    	    child {
                    	        node[end] {}
                    	    }
                    	}
                    	edge from parent
                	}
                    edge from parent
                }
                child [color=blue]{
                    node[end, label=right:{}] {}
                    child {
                    	node[end, label=right:{}] {}
                    	child [color=red]{
                    	    node[end] {}
                    	    child {
                    	        node[end] {}
                    	    }
                    	}
                    	edge from parent
                	}
                    edge from parent
                }
                child [color=blue]{
                    node[end] {}
                    child {
                    	node[end, label=right:{}] {}
                    	child [color=red]{
                    	    node[end] {}
                    	    child {
                    	        node[end] {}
                    	    }
                    	}
                    	edge from parent
                	}
                    edge from parent
                }
            edge from parent    
        }
        child {
            node[end, label=above:{$T_{u_2}$}, color=blue] {}        
            child [color=blue]{
                    node[end, label=below:{}] {}
                    child {
                    	node[end, label=right:{}] {}
                    	child [color=red]{
                    	    node[end] {}
                    	}
                    	edge from parent
                	}
					child {
                    	node[end, label=right:{}] {}
                    	child [color=red]{
                    	    node[end] {}
                    	}
                    	edge from parent
                	}
                    edge from parent
                }
                child [color=blue]{
                    node[end, label=above:{}] {}
                    child {
                    	node[end, label=right:{}] {}
                    	child [color=red]{
                    	    node[end] {}
                    	}
                    	edge from parent
                	}
					child [color=blue]{
                    	node[end, label=right:{}] {}
                    	child [color=red]{
                    	    node[end] {}
                    	}
                    	edge from parent
                	}
                    edge from parent
                }
            edge from parent      
        }
        child {
            node[end, label=above:{$T_{u_1}$}, color=blue] {}        
            child [color=blue]{
                    node[end, label=right:{}] {}
                    child {
                    	node[end, label=right:{}] {}
                    	edge from parent
                	}
					child {
                    	node[end, label=right:{}] {}
                    	edge from parent
                	}
					child {
                    	node[end, label=right:{}] {}
                    	edge from parent
                	}
                    edge from parent
                }
            edge from parent    
        };
    \end{tikzpicture}
    \begin{tikzpicture}[thin,amat/.style={matrix of nodes,nodes in empty cells,
      row sep=2em,
      nodes={draw,solid,circle,execute at begin node={$u_{\the\pgfmatrixcurrentrow}$}}},
      amat2/.style={matrix of nodes,nodes in empty cells,
      row sep=2em,
      nodes={draw,solid,circle,execute at begin node={$v_{\the\pgfmatrixcurrentrow}$}}},
      fsnode/.style={},
      ssnode/.style={}, baseline=0cm]
    
     \matrix[amat,nodes=fsnode,label=above:$L$] (mat1) {\\
     \\
     \\};
    
     \matrix[amat2,right=2cm of mat1,nodes=ssnode,label=above:$R$] (mat2) {\\
     \\ 
     \\};
    
     \draw  (mat1-1-1) edge (mat2-1-1)
     		(mat1-1-1) edge (mat2-2-1)
            (mat1-1-1) edge (mat2-3-1)
	        (mat1-2-1) edge (mat2-1-1)
	        (mat1-3-1) edge (mat2-1-1)
	        (mat1-3-1) edge (mat2-2-1);
	 
    \end{tikzpicture}
    \,\,
    \begin{tikzpicture}[grow=left, scale=0.5, baseline=0cm]
    \node[end, label=right:{$S$}] {}
        child {
            node[end, label=above:{$S_{v_1}$}, color=blue] {}        
            child [color=blue]{
                    node[end] {}
                    child {
                    	node[end, label=right:{}] {}
                    	edge from parent
                	}
					child {
                    	node[end, label=right:{}] {}
                    	child [color=red]{
                    	    node[end] {}
                    	}
                    	edge from parent
                	}
					child {
                    	node[end, label=right:{}] {}
                    	child [color=red]{
                    	    node[end] {}
                    	    child {
                    	        node[end] {}
                    	    }
                    	}
                    	edge from parent
                	}
                    edge from parent
                }
                child [color=blue]{
                    node[end, label=right:{}] {}
                    child {
                    	node[end, label=right:{}] {}
                    	child [color=red]{
                    	    node[end] {}
                    	}
                    	edge from parent
                	}
					child {
                    	node[end, label=right:{}] {}
                    	child [color=red]{
                    	    node[end] {}
                    	    child {
                    	    node[end] {}
                    	}
                    	}
                    	edge from parent
                	}
                    edge from parent
                }
                child [color=blue]{
                    node[end] {}
                    child {
                    	node[end, label=right:{}] {}
                    	child [color=red]{
                    	    node[end] {}
                    	    child {
                    	    node[end] {}
                    	}
                    	}
                    	edge from parent
                	}
                    edge from parent
                }
            edge from parent         
        }
        child {
            node[end, label=above:{$S_{v_2}$}, color=blue] {}        
            child [color=blue]{
                    node[end, label=right:{}] {}
                    child {
                    	node[end, label=right:{}] {}
                    	edge from parent
                	}
					child {
                    	node[end, label=right:{}] {}
                    	edge from parent
                	}
					child {
                    	node[end, label=right:{}] {}
                    	child [color=red]{
                    	    node[end] {}
                    	    child {
                    	    node[end] {}
                    	}
                    	}
                    	edge from parent
                	}
                    edge from parent
                }
                child [color=blue]{
                    node[end, label=right:{}] {}
                    child {
                    	node[end, label=right:{}] {}
                    	child [color=red]{
                    	    node[end] {}
                    	    child {
                    	    node[end] {}
                    	}
                    	}
                    	edge from parent
                	}
                    edge from parent
                }
                child [color=blue]{
                    node[end, label=right:{}] {}
                    child {
                    	node[end, label=right:{}] {}
                    	child [color=red]{
                    	    node[end] {}
                    	    child {
                    	    node[end] {}
                    	}
                    	}
                    	edge from parent
                	}
                    edge from parent
                }
            edge from parent    
        }
        child {
            node[end, label=below:{$S_{v_3}$}, color=blue] {}        
            child [color=blue]{
                    node[end, label=right:{}] {}
                    child {
                    	node[end, label=right:{}] {}
                    	edge from parent
                	}
					child {
                    	node[end, label=right:{}] {}
                    	edge from parent
                	}
					child {
                    	node[end, label=right:{}] {}
                    	edge from parent
                	}
                    edge from parent
                }
            child [color=blue]{
                    node[end, label=right:{}] {}
                    edge from parent
                }
            child [color=blue]{
                    node[end, label=right:{}] {}
                    edge from parent
                }
            edge from parent     
        };
    \end{tikzpicture}
    \end{center}
  \caption{Graphical example of property \eqref{eq:bipartite_equiv}.}
\end{figure}

To show \cc{\#P}-hardness, we proceed via a reduction from the known \cc{\#P}-complete problem of \#\textsc{PerfectMatchings} \cite{valiant1979complexity}. A \emph{perfect matching} in a graph with $2n$ vertices is a set of exactly $n$ pairwise disjoint edges. \\
\begin{quote}
\quad \quad \noindent \#\textsc{PerfectMatchings}

\quad \quad \textit{Input:} A bipartite graph $B$ with $2n$ vertices

\quad \quad \textit{Output:} The number of perfect matchings in $B$ \\
\end{quote}

Let $B = (L, R, E)$ be a bipartite graph where $L$ and $R$ are, respectively, the sets of left and right vertices, and $E \subseteq L \times R$ is the set of edges. The main idea is to construct a tree $T_u$ for each vertex $u \in L$, and a tree $S_v$ for each $v \in R$ (the colored portions in Figure \ref{fig:equiv_example}) such that for all $u \in L, v \in R$, 
\begin{equation}\label{eq:bipartite_equiv}
  (u,v) \in E \implies T_u \preceq! \, S_v \;\; \text{ and }  \;\; (u,v) \notin E \implies T_u \not \preceq \, S_v 
\end{equation}
If these trees are joined together by their roots as in Figure \ref{fig:equiv_example} to form two trees $T$ and $S$, then the number of rooted subtrees of $S$ isomorphic to $T$ is equal to the number of matchings of $B$.

Here we show how to construct trees that have property  \eqref{eq:bipartite_equiv}, starting with the portion colored blue in Figure \ref{fig:equiv_example}. For every vertex of $L$ indexed by $i$, $T_{u_i}$ is constructed to have $i$ depth-1 vertices, and each of those has $|L| - i + 1$ children (we refer to roots of the trees $T_{u_i}$ as `depth-0' vertices, and their children as `depth-1' vertices). For the $\{S_v\}$ trees, for every vertex in $R$ indexed by $j$, $S_{v_j}$ has exactly $|L|$ depth-1 vertices. We then populate the children of those vertices by ``inserting'' one copy of $T_{u_i}$ into $S_{v_j}$ if and only if $(u_i, v_j) \in E$; in other words, $i$ among $S_{v_j}$'s depth-1 vertices has at least $|L| - i + 1$ children, and the remaining $|L| - i$ depth-1 vertices have no children. This construction guarantees that at least one instance of $T_{u_i}$ appears in $S_{v_j}$ whenever $(u_i, v_j) \in E$.

To eliminate the possibility of multiple instances of $T_{u_i}$ appearing in $S_{v_j}$ for any $i$ and $j$, such as, for instance, between $T_{u_3}$ and $S_{v_1}$ in Figure~\ref{fig:equiv_example}, we append chains of vertices to their leaves (colored red) which constrain the number of valid isomorphisms between the two trees to be at most one. For each $i$, attach a chain of length $i-1$ to each leaf of $T_{u_i}$. For every depth-2 vertex $w$ of tree $S$, if there is an edge $(u_i, v_j)$ in $B$, attach a chain of length $i-1$ to the $i$-th child of $w$, if it has any. \\

This establishes a bijection between the matchings of $B$ and the rooted subtrees of $S$ isomorphic to $T$. The bijection is given as follows: \\
\begin{itemize}
\item Given a matching $\psi : L \to R$ of $B$, for each $u \in L$, $T_{u}$ is isomorphic to exactly one subtree of $S_{\psi(u)}$, and the union of each of these subtrees along with the root $s$ constitutes the unique instance of $T$ within $S$ corresponding to $\psi$.

\item Conversely, given a rooted subtree $S' \subseteq S$ isomorphic to $T$, the matching consists of edges $(u_i,v_i) \in L \times R, i \in \{1,\dots,|L|\}$, each corresponding to a map between the subtree $S'_{v_i} \subseteq S'$ and $T_{u_i}$, the existence of which is guaranteed by the construction. \\
\end{itemize}
  
Valiant has shown that counting the number of perfect matchings is \cc{\#P}-complete \cite{valiant1979permanent}. Now if a Turing machine can query an oracle for \#\textsc{RootedSubtreeIsomorphism} on instance $(T, S)$, the answer is equal to the number of bipartite matchings in graph $B$, thus establishing the result.
\end{proof}

Like \textsc{BipartiteMatching}, this result is surprising considering that its corresponding decision problem, \textsc{RootedSubtreeIsomorphism}, is in \cc{P}.

\section{The subgraph-counting algorithm}
\label{sec:main}

\subsection{Boundary conditions}
\label{sec:boundary}
Recall from the introduction that our method consists in separating the 1-shell from the 2-core and performing subtree-counting on the former, then running a regular motif search algorithm on the latter. In doing so we must take care of a few boundary cases that arise at the interface between the 1-shell and the 2-core. To see why, let $H$ be a query graph and $G$ a network, and let $C_H$ be the 2-core (``C'' for ``core'') and $P_H$ be the 1-shell (``P'' for ``periphery'') of $H$; similarly define $C_G,P_G$.
An instance of $H$ could appear split between the 1-shell and the 2-core of $G$ such that only part of $P_H$ extends into $P_G$, but part of $P_H$ is in $C_G$.  Because of this, to properly count every instance of $H$ in $G$, we must run each tree $T_H \in P_H$ as well as all of its subtrees through Algorithm~\ref{alg:buildtree}. By using the symmetry-breaking technique introduced in \cite{grochow2007network} on $T_H$, we can cut down the cost incurred by enumerating all possible subtrees by an amount proportional to the product of the number of symmetries at each level of trees of $P_H$.

On the flip side, another boundary condition provides us with a powerful early abort when searching for instances of $H$ inside $G$: suppose that $H$ appears split between $C_G$ and $P_G$, only this time $C_H$ extends partially into $P_G$. As we prove below, this cannot happen. In fact, it cannot happen between a $k$-core in $H$ and an $h$-shell in $G$ for any integers $h < k$ without violating induced isomorphism, which we require in the formal definition of ``an instance of $H$ inside $G$''.

\begin{definition} \label{def:induced_iso}We say that an instance of $H = (V_H, E_H)$ appears in $G=(V_G, E_G)$ when there exists an injection $\phi:V_H \mapsto V_G$ between the nodes of $H$ and those of $G$ that is an isomorphism onto the subgraph induced by its image: 
\begin{equation}
\forall u,v \in V_H, \enspace (u,v) \in E_H \iff (\phi(u), \phi(v)) \in E_G
\end{equation}
\end{definition}

\begin{proposition}\label{thm:early_abort}
   If $\phi\colon H \to G$ is an isomorphism onto an induced subgraph, then 
   \[
   coreness_H(v) \leq coreness_G(\phi(v)) \qquad (\forall v \in V(H))
   \]
\end{proposition}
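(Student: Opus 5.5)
The plan is to use the extremal characterization of the $k$-core: it is the unique maximal subgraph of minimum degree at least $k$. Equivalently, it is the union of all subgraphs of $G$ with minimum degree $\geq k$. The key consequence is that if $K \subseteq G$ is any subgraph in which every vertex has at least $k$ neighbors inside $K$, then $K$ is contained in the $k$-core of $G$.

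First I would verify this union property from the definition. If $K_1, K_2$ both have minimum degree $\geq k$, then so does $K_1 \cup K_2$, since degrees only increase when taking unions. Hence the maximal such subgraph contains every such subgraph. I would also check that it agrees with the peeling procedure of Section~\ref{subsec:kcore}. Peeling never removes a vertex of such a $K$: by induction on the removal order, the first vertex of $K$ to be removed would still have at least $k$ neighbors, namely its neighbors in $K$, none of which has been removed yet. So the $k$-core is well defined and characterized this way, and coreness$_G(w)$ is the largest $k$ such that $w$ lies in the $k$-core of $G$.

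The main step is then immediate. Fix $v \in V(H)$ and let $k = \mathrm{coreness}_H(v)$, so $v$ lies in the $k$-core $K_H$ of $H$. Consider $\phi(K_H)$, the image subgraph with vertex set $\phi(V(K_H))$ and edges $\phi(u)\phi(w)$ for $uw \in E(K_H)$. Since $\phi$ is injective and edge-preserving, $\phi(K_H)$ is a subgraph of $G$. Each vertex $\phi(u)$ has degree in $\phi(K_H)$ equal to $\deg_{K_H}(u) \geq k$. By the union property, $\phi(K_H)$ is contained in the $k$-core of $G$, so $\phi(v)$ is in the $k$-core of $G$ and $\mathrm{coreness}_G(\phi(v)) \geq k$.

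There is no real obstacle. The only care needed is to make the equivalence between maximality and peeling explicit, and to match the paper's convention on shells so that coreness is the largest $k$ with $v$ in the $k$-core. Notably, inducedness is not needed: the argument uses only that $\phi$ is an injective homomorphism.
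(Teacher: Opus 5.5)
Your argument is correct and is essentially the paper's proof: take the $k$-core $K$ of $H$, observe that every vertex of $\phi(K)$ still has at least $k$ neighbors inside $\phi(K)$, and conclude by maximality that $\phi(K)$ lies in the $k$-core of $G$. Your check of the maximality property and of its agreement with peeling spells out what the paper leaves implicit. Your remark that inducedness is unnecessary is also right, since the argument only uses that $\phi$ is injective and sends edges to edges.
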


\begin{proof}
Let $K$ be the $k$-core of $H$. Then, by the definition of $k$-core, every vertex of $K$ has at least $k$ neighbors within $K$. Since $\phi$ is an isomorphism onto its image, we have that every vertex of $\phi(K)$ contains at least $k$ neighbors within $\phi(K)$, hence $\phi(K)$ is contained in the $k$-core of $G$.
\end{proof}

We can use this fact to create an early abort by coreness: our algorithm should never bother to try to map a vertex in $H$ to another in $G$ with lesser coreness. 

\begin{remark}
One may wonder whether the onion decomposition \cite{hebert2016multi}, a more fine-grained generalization of the $k$-core decomposition, could provide an even more powerful early abort. This may indeed be possible, but must be done with some care (the ``global'' onion layer must be used, rather than its layer within its shell, since the latter is not necessarily increased by an isomorphism onto an induced subgraph). As early-abort using coreness was already quite effective, and we were computing at least the $2$-core anyway for our subtree-counting approach, we did not pursue this direction, though it might be interesting fodder for future work.
\end{remark}

\subsection{Main algorithm}
We build on the Grochow--Kellis symmetry-breaking algorithm, and refer to the functions from their paper by the same name. Algorithm \ref{alg:countsubgraphs} takes in two graphs $H$ and $G$, and returns the number of images of $H$ that appear in $G$. It starts by running a $k$-core decomposition on both graphs. Then, as in the Grochow--Kellis algorithm, it finds all automorphisms from $H$ onto itself by calling \textsc{FindSubgraphInstances($H, H$)} \cite{grochow2007network}, groups the nodes that map to each other in the same equivalence class and returns a representative node from each class. We abbreviate this behavior in the function \emph{findEquivalenceClassRepresentatives}. The symmetry-breaking conditions are also computed similarly as in \cite{grochow2007network}.

The next part of the algorithm consists in listing all the subtrees of the 1-shell of $H$, up to symmetry. This is done by the function \emph{enumerateSubtrees}, which stores them in the dictionary \emph{subtrees}, keyed by the nodes in $H$ at which they are rooted. It does this by again making use of symmetry-breaking, and listing only one tree from each equivalence class of symmetries. One could also use standard \textsc{Rooted Tree Isomorphism} algorithms \cite{valiente2002tree}, but for simplicity of the code we use the same isomorphism algorithm for this step as well. Since $H$ is usually small, in practice we find that this step is never the bottleneck.

\begin{algorithm}
\caption{\textsc{CountSubgraphs}$(H, G)$}
\label{alg:countsubgraphs}
\begin{algorithmic}
\STATE{//Initialization}
\STATE{kCoreDecompose(H)}
\STATE{kCoreDecompose(G)}
\STATE{$ECR_H$ := findEquivalenceClassesRepresentatives($H$)}
\STATE{$SBC_H$ := findSymmetryBreakingConditions($H$)}
\STATE{$f$ := initializeParitalMap}
\STATE{}
\STATE{//enumerating all subtrees in $H$'s 1-shell}
\FOR{$h$ in $ECR_H$}
	\IF{$h \in P_H$ OR $h$ is a root} 
		\STATE{$T$ := detachTree($H, h$[, parent($h$)])}
		\STATE{subtrees[$h$] := enumerateSubtrees($T$, $h$)}
	\ENDIF
\ENDFOR
\STATE{}
\STATE{//counting subtrees in $G$'s 1-shell and boundary}
\FOR{$g$ in kCoreTraversalOrder($G$)}
	\IF{$g \in P_G$ AND $H$ is a tree}
		\STATE{$S$ := detachTree($G$, $g$, parent($g$))}
		\FOR{$h$ in $ECR_H$}
			\STATE{1ShellCount := \textsc{CountRootedSubtrees}($H, h, S, g$)}
		\ENDFOR
	\STATE{}
	\ELSIF{$g$ is a root}
		\STATE{$S$ := detachTree($G, g$)}
		\FOR{$h$ in $ECR_H$}
			\FOR{$T_h$ in subtrees[$h$]}
				\STATE{boundaryMultiplier := \textsc{CountRootedSubtrees}($T_h, h, S, g$)}
				\STATE{$f$.setMultiplier(boundaryMultiplier)}
				\STATE{$G$.markNodes($S$)}
				\STATE{\textsc{IsomorphicExtensions($f, H \backslash T_h, G, SBC_H$)}}
			\ENDFOR
		\ENDFOR
		\STATE{$G$.removeNodes($S$)}
	\ENDIF
\ENDFOR
\STATE{}
\STATE{//counting subgraphs in the 2-core}
\STATE{2CoreCount := count(\textsc{FindSubgraphInstances($H, G$)})}
\STATE{boundaryCount := $\prod_{f \in boundary} f$.getMultiplier}
\STATE{}
\RETURN 1ShellCount + boundaryCount + 2CoreCount

\end{algorithmic}
\end{algorithm}
The algorithm proceeds to count the subtrees of $H$ that appear in $G$'s 1-shell and at the boundary with its 2-core. It loops through all of $G$'s nodes in the same order as the $k$-core decomposition to quickly scan the 1-shell nodes and then remove them, making the subsequent 2-core search more tractable. We make use of the function $f$, which represents the partial isomorphism between a subset of $H$ and $G$ as the algorithm progresses in its search. For the boundary case, we associate each $f$ with a multiplier, which simply consists of a count of the number of subtrees that ``extend" from $f$ into the 1-shell of $G$. \textsc{IsomorphicExtensions} is a function introduced in \cite{grochow2007network} which incrementally extends $f$ one vertex at a time until it maps the entirety of $H$ onto a subset of $G$. By using the multiplier, we can initialize $f$ from a vertex at the boundary between the 1-shell and the 2-core of $G$, and let it expand inward into the 2-core, and use \textsc{CountRootedSubtrees} to count partial instances of $H$ in the 1-shell. Such boundary vertices are useful in the exposition, so we refer to them with a new term:

\begin{definition}
A \emph{1-shell root} of a graph $G$ is a vertex $v \in G$ in the 2-core that has at least one neighbor in the 1-shell.
\end{definition}
The 1-shell roots are precisely the roots of the trees appearing in $G$'s 1-shell.

At the end of the search, we multiply the number of instances found by \textsc{IsomorphicExtensions} by their individual multipliers to recover the total number of boundary instances. The vertices scanned by the subtree-counting routine are marked to prevent $f$ extending into the 1-shell. After all the 1-shell vertices are taken care of, they are removed and only the 2-core of $G$ remains.

Note that here we use a slightly modified version of \textsc{IsomorphicExtensions} that accounts for the early abort by coreness feature discussed above. Namely, when searching for an image vertex $g \in G$, \textsc{IsomorphicExtensions} first checks the coreness of $g$ to ensure that it is greater or equal than the coreness of its tentative preimage $h$.

\subsection{Upper bounds on runtime}
The subtree-counting algorithm works by recursively computing the number of matchings at each level of a query tree $T$ and a target tree $S$. To obtain an analytic upper bound on the number of steps in its execution, we count the total number of calls to \textsc{CountRootedSubtrees}, and multiply that by the worst-case runtime of computing the permanent of a weighed adjacency matrix (see Subsection \ref{subsec:per}). Note that in general such a matrix will not be square, which leads to an added binomial coefficient in the permanent's runtime contribution. This is because the permanent of a non-square matrix is the sum of the permanents of each of its square submatrices.

The following is an approximate expression of the number of steps in \textsc{CountRootedSubtrees}' execution on trees $T$ and $S$ . $T'$ is the set of non-leaf vertices of $T$, $\phi$ is the root-preserving isomorphism mapping $T$ to a subtree of $S$, $m_u$ and $n_v$ are the number of children of vertices $u \in T$ and $v \in S$ respectively:

\begin{equation}
\label{eq:subtrees_upperbound}
f_{crs}(T, S) \approx \sum_{u \in T'} \binom{n_{\phi(u)}}{m_u} \times m_u^2 \times 2^{m_u - 1}
\end{equation}

By comparison, the Grochow--Kellis algorithm will test all $m$ vertices of $T$ with all $n$ vertices of $S$ up to symmetries of $T$.

\begin{equation}
\label{eq:gk_upperbound}
f_{gk}(T, S) \approx \frac{1}{|Aut(T)|}\binom{n}{m} = \frac{1}{|Aut(T)|}\binom{1+\sum_{u \in T'} n_{\phi(u)}}{1+\sum_{u \in T'} m_u}
\end{equation}

Note that the right hand side of Eq. \ref{eq:subtrees_upperbound} is a sum over binomial coefficients of $n_v$ and $m_u$, whereas in Eq. \ref{eq:gk_upperbound} it is a binomial coefficient of the sum, which in general is exponentially worse. 

In the context of counting subgraphs, \textsc{CountRootedSubtrees} will be called once for each pair of non-leaf vertices in $H$ and 1-shell roots in $G$. To factor in the speedup afforded by the coreness condition described in Theorem \ref{thm:early_abort}, let $m = |H|$ and $n = |G|$, let $m_K$, $n_K$ be the sizes of the $k$-cores of $H$ and $G$, and $R$ the number of 1-shell roots in $G$. This gives an upper bound on the total runtime of the subgraph counting algorithm:

\begin{equation}
\label{eq:coreness_upperbound}
f_{cs}(H, G) \leq \frac{R}{|Aut(H)|} \times \sum_{T \subseteq P_H} f_{crs}(T, S) \times \binom{n_K}{m_K} \binom{n_{K-1} - m_K}{m_{K-1} - m_K} \dots \binom{n_2 - m_3}{m_2 - m_3}
\end{equation}

We compare this with an upper bound on the Grochow--Kellis algorithm's performance:

\begin{equation}
f_{gk}(H, G) \leq \frac{1}{|Aut(H)|}\binom{n}{m}
\end{equation}

Again, a binomial coefficient of a sum is in general much larger than the product of the binomial coefficients of the summands. \\

\section{Real-data experiments}
\label{sec:experiments}

\subsection{Methods}
We implemented the Grochow--Kellis algorithm and Algorithm \ref{alg:countsubgraphs} in Python 3.6 and ran them on a set of Intel Xeon E5-2680 v3 processors at 2.50GHz running Red Hat Enterprise Linux 7, from CU Boulder's RMACC supercomputer. Code is available at \cite{code}. Because each motif/network query is a computation that can be run on a separate processor, we made use of concurrency wherever possible. The runtimes we report in the following section are expressions of the total time taken for all concurrent processes summed together. We note that although Grochow--Kellis is no longer the state of the art \cite{patra2020review}, the subtree-counting algorithm can easily be incorporated into other algorithms as well. The point of the experiments here is to demonstrate, in as simple a setting as possible, the potential gains to be had from employing this strategy. We expect similar gains in adding the subtree counting strategy onto other motif-based counting algorithms, such as \cite{ribeiro2014g}.

Using data from the communityFitNet corpus \cite{ghasemian2017evaluating} we tested a sample of 8 undirected networks of comparable sizes (around 150 vertices) spanning different topologies and categories of systems. We took care in our selection (by visual inspection) to make sure each network's topology was typical of the networks that represent the same type of system in the communityFitNet dataset. In each network we searched for all undirected graphs up to size 8 (Figure \ref{fig:runtime_comparison}). Informed by the upper bounds derived in the previous section, we then compared the speedup afforded by our algorithm to the following network statistics: size of the 1-shell, maximum degree and average non-leaf degree in the 1-shell, maximum and average coreness of 2-core, and maximum and average tree depth in 1-shell. We identified the first three statistics as being most predictive of a large performance improvement in our algorithm. We computed these statistcs for all 442 undirected networks from the communityFitNet corpus, highlighting those with large, dense 1-shells and peripheral hubs as instances where our algorithm is likely to yield very large speed-ups. Finally, we tested three networks of around 1000 vertices, and demonstrate our algorithm's performance in querying all undirected graphs up to size 6.

\subsection{Results and discussion}
\label{sec:discussion}
Figure \ref{fig:runtime_comparison} shows our selection of 8 undirected networks along with network statistics we considered as potential indicators for our algorithm's improvement. Figure \ref{fig:lsample_comparison} shows runtimes on three larger networks for additional benchmarking. At first sight, the contribution of subtree-counting to our algorithm's speedup is more evident than that of early abort by coreness. We found that the speedup gained from queries of high maximal coreness is negligible compared with that of queries of high maximal hub degree (or star graphs). A hint to this lies in the fact that products of binomial coefficients of terms of a sum, such as in Eq. \ref{eq:coreness_upperbound}, are vastly larger than sums of binomial coefficients of terms of a sum, as in Eq. \ref{eq:subtrees_upperbound}, but both are much smaller than binomial coefficients involving the entire sum (Eq. \ref{eq:gk_upperbound}).

Our aim in these experiments was to provide a common-platform comparison with a standard motif-centric algorithm from the literature. Because our code is written in Python, it was expected that our ``wall-clock'' runtimes are not comparable to those of available motif-finding software, which is often written in faster languages such as Java \cite{grochow2007network} or C++ \cite{ribeiro2014g} and optimized for experimental use. Furthermore, we strongly suspect that the super-exponential runtime on large hubs such as the \emph{C. elegans} interactome (Figure \ref{fig:runtime_comparison}) is due to the combinatorial number of subgraphs being counted, and not to the difficulty in counting them. To illustrate this, we note that the most time-consuming size-8 subgraph to count on the \emph{C. elegans} interactome network (of which there are 11117) took only 214 seconds for our algorithm. By comparison, the most expensive of all 21 size-5 subgraphs took 2459 seconds for Grochow--Kellis. We expect that an implementation in a faster language would yield similar relative speed-ups compared to previous algorithms in the same language, and thus even larger overall speed-ups.

Our algorithm and subtree-counting technique was designed for undirected graphs, but can easily be extended to include directed and labeled graphs. 

\begin{figure}[H]
\label{fig:runtime_comparison}
	\centering
	\includegraphics[width=15.6cm]{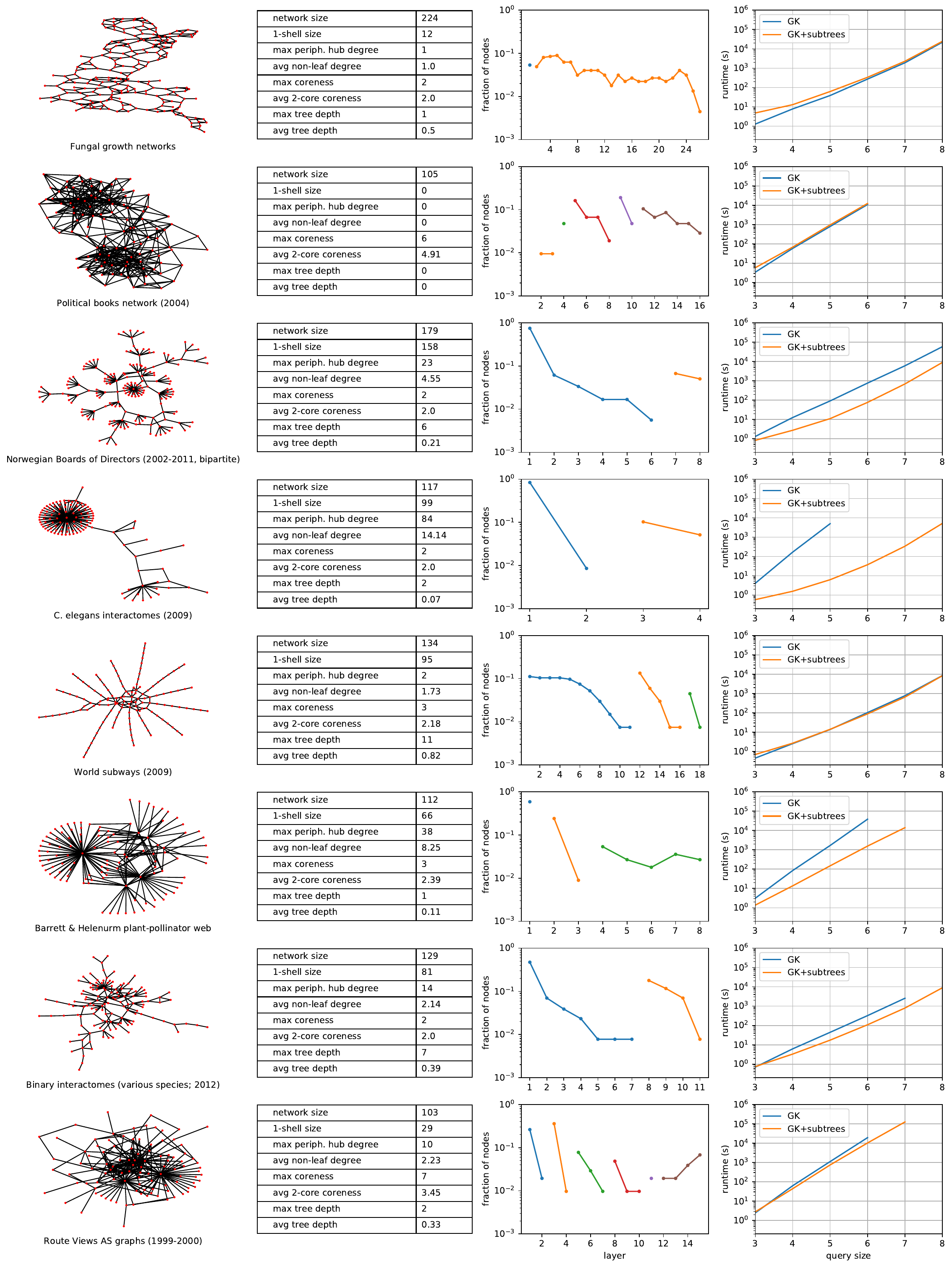}
	\caption{The first column shows each network from our sample. In the second column, \emph{max periph. hub degree} is the maximum degree of any vertex in the trees of the 1-shell, including their roots. The ``degree" of a 1-shell root is taken as the number of adjacent vertices in the 1-shell. The third column shows the onion spectrum \cite{hebert2016multi} of each network, and the last column shows a runtime comparison between our algorithm and the Grochow--Kellis symmetry-breaking algorithm.}
\end{figure}

\begin{figure}[h]
\label{fig:lsample_comparison}
	\centering
	\includegraphics[width=10cm]{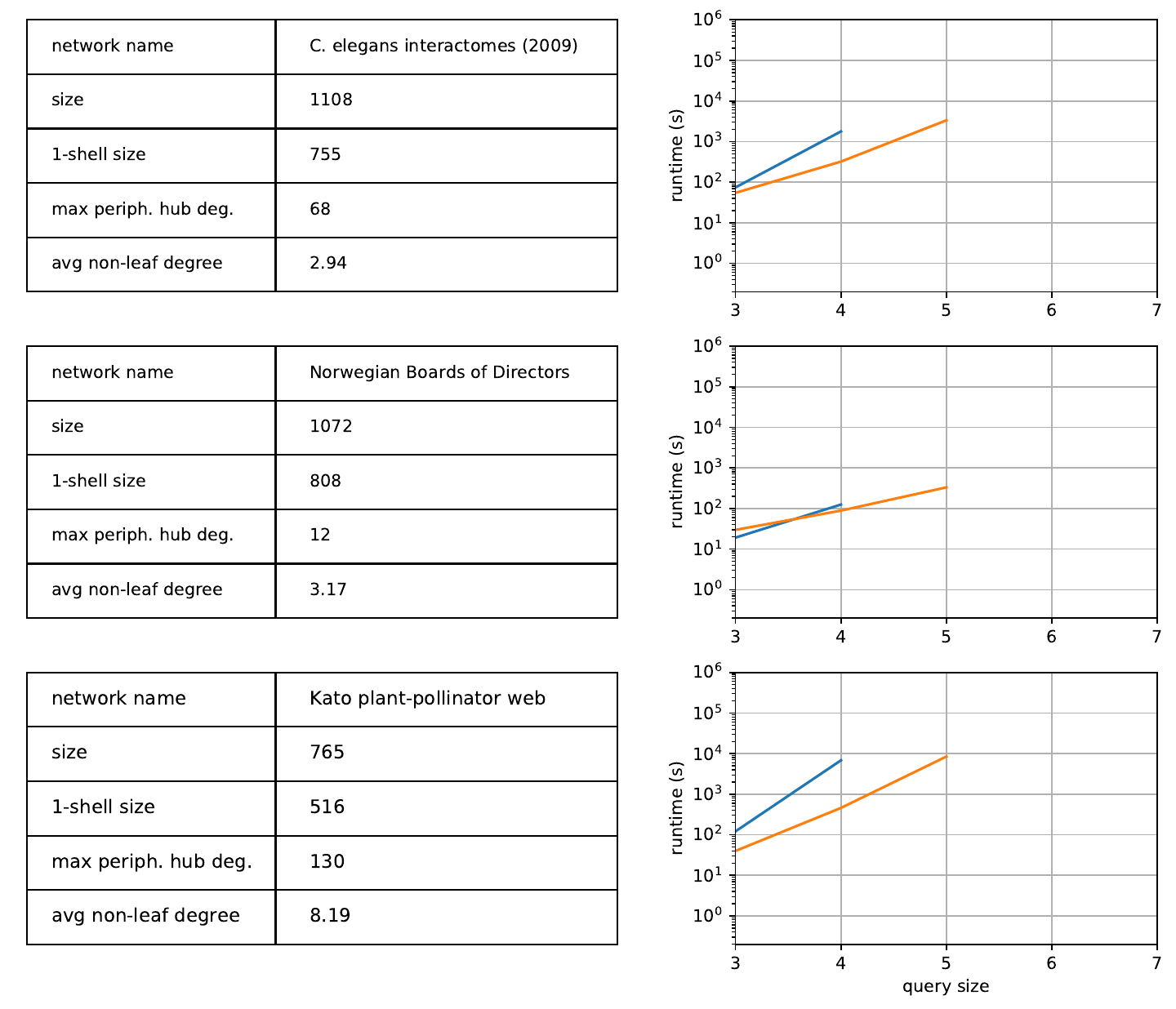}
	\caption{Similar to Figure~\ref{fig:runtime_comparison}, but for three larger graphs. Due to their size, we omit their pictures and onion decompositions. (Our Python implementation of the Grochow--Kellis algorithm did not finish on size-5 subgraphs within the bounds of our patience.)}
\end{figure}

\section{Conclusion and Future Work}
\label{sec:conclusion}
We designed and implemented an algorithm that counts instances of a given graph in a network, leverages tree and core-periphery structure, and can be adapted to any popular motif-centric algorithm. Our study suggests that network and query topology can have a great impact on the runtime of motif search algorithms, and alleviates this problem in the case of trees. However, hubs are still major contributors to the complexity of motif search, whether they appear at the peripheries of a network or within the core. A more detailed analysis of the relationship between distribution of runtimes and topologies of query graphs could reveal the best angles of attack for more instance-aware algorithms, which we leave for future work.

One natural way to extend our work is to use properties of $k$-shells for values of $k$ larger than one. Analogously to the decomposition of the 1-shell into trees, one can define an ``ear decomposition'' that maps the 2-shell onto a collection of paths that have only two vertices in common with the rest of the graph. This is equivalent to 2-edge-connectivity, and isomorphism of 2-edge-connected graphs is \textsc{GraphIsomorphism}-complete. However, there may be other properties of the 2-shell that enable practical speedup in spite of high worst-case complexity.

As previously discussed, our subtree-counting technique can be combined with any state-of-the-art motif-centric algorithm, with a small amount of adaptation. The main idea is to separate the 1-shell from the rest of a network, count subtrees in the 1-shell and run the chosen algorithm on the 2-core. To count subtrees that appear across the 1-shell/2-core boundary, some care has to be taken in modifying the algorithm to stop its search midway when it encounters a 1-shell root, call the subtree-counting method, and store the result for processing at the end of its search.

\bibliographystyle{siamplain}
\bibliography{references}

\end{document}